\newcommand\org@hypertarget{}
\let\org@hypertarget\hypertarget
\renewcommand\hypertarget[2]{%
  \Hy@raisedlink{\org@hypertarget{#1}{}}#2%
  }
\definecolor{darkgreen}{RGB}{50,190,50}
\definecolor{darkblue}{RGB}{0,0,190}
\definecolor{darkred}{RGB}{238,0,0}
\newcommand{\nl}{\ensuremath{\hspace*{-0.5pt}}}
\newcommand{\nr}{\ensuremath{\hspace*{0.5pt}}}
\newcommand{\comm}[2]{\ensuremath{\left[\right.\! #1 \,, #2 \!\left.\right]}}
\newcommand{\expval}[1]{\ensuremath{\left\langle\right. #1 \left.\right\rangle}}
\newcommand{\pr}{^{\prime}}
\newcommand{\subtiny}[3]{\ensuremath{_{\hspace{#1 pt}\protect\raisebox{#2 pt}{\tiny{$ #3$}}}}}
\newcommand{\dv}{\ensuremath{|\hspace*{-1.3pt}|}}
\DeclareMathOperator{\artanh}{artanh}
\DeclareMathOperator{\diag}{diag}
\newcommand{\tr}{\textnormal{Tr}}
\newcommand{\djj}{d\kern-0.4em\char"16\kern-0.1em}
\newtheorem{theorem}{Theorem}
\newtheorem{lemma}{Lemma}
\newtheorem{coro}{Corollary}
\begin{document}

\title{Passivity and practical work extraction using Gaussian operations}
\author{Eric G. Brown}
\email{eric.brown@icfo.es}
\affiliation{ICFO - Institut de Ciencies Fotoniques, The Barcelona Institute of Science and Technology, 08860 Castelldefels (Barcelona), Spain}
\author{Nicolai Friis}
\email{nicolai.friis@uibk.ac.at}
\affiliation{Institute for Theoretical Physics, University of Innsbruck, Technikerstra{\ss}e 21a, 6020 Innsbruck, Austria}
\author{Marcus Huber}
\email{marcus.huber@univie.ac.at}
\affiliation{Institute for Quantum Optics and Quantum Information, Austrian Academy of Sciences, Boltzmanngasse 3, 1090 Vienna, Austria}
\affiliation{Group of Applied Physics, University of Geneva, 1211 Geneva 4, Switzerland}
\affiliation{Departament de F\'isica, Universitat Aut\`onoma de Barcelona, 08193 Bellaterra, Spain}
\affiliation{ICFO - Institut de Ciencies Fotoniques, The Barcelona Institute of Science and Technology, 08860 Castelldefels (Barcelona), Spain}

\begin{abstract}
Quantum states that can yield work in a cyclical Hamiltonian process form one of the primary resources in the context of quantum thermodynamics. Conversely, states whose average energy cannot be lowered by unitary transformations are called passive. However, while work may be extracted from non-passive states using arbitrary unitaries, the latter may be hard to realize in practice. It is therefore pertinent to consider the passivity of states under restricted classes of operations that can be feasibly implemented. Here, we ask how restrictive the class of Gaussian unitaries is for the task of work extraction. We investigate the notion of Gaussian passivity, that is, we present necessary and sufficient criteria identifying all states whose energy cannot be lowered by Gaussian unitaries. For all other states we give a prescription for the Gaussian operations that extract the maximal amount of energy. Finally, we show that the gap between passivity and Gaussian passivity is maximal, i.e., Gaussian-passive states may still have a maximal amount of energy that is extractable by arbitrary unitaries, even under entropy constraints.
\end{abstract}

\maketitle

\section{Introduction}

At the very core of quantum thermodynamics \textemdash\ which has recently seen a surge in interest from the quantum information community~\cite{GooldHuberRieraDelRioSkrzypczyk2016, MillenXuereb2016, VinjanampathyAnders2016}\textemdash\ lies the task of extracting work from quantum systems. Ideally, this is achieved by reversible cyclic processes that can be represented by unitary transformations, which, in turn, form the most basic primitive of a cyclically operating engine. However, from so-called \emph{passive} states~\cite{PuszWoronowicz1978} no work can be extracted unitarily if only a single copy of the system is available. Viewing quantum thermodynamics as a resource theory of work extraction~\cite{BrandaoHorodeckiOppenheimRenesSpekkens2013,CoeckeFritzSpekkens2016}, it is therefore tempting to view passive states as being freely available. Curiously, most passive states still contain extractable work, that can only be accessed by processing multiple copies using entangling operations \cite{AlickiFannes2013}. This has sparked interested in the role of entanglement in work extraction~\cite{HovhannisyanPerarnauLlobetHuberAcin2013} and more generally in the structure of passive states~\cite{PerarnauLlobetHovhannisyanHuberSkrzypczykTuraAcin2015}. States from which no energy can be extracted unitarily, no matter how many copies are available, are called completely passive, and without further conserved charges the unique completely passive state is the so-called thermal state. Naturally, the extraction of work from non-passive quantum states using arbitrary unitaries has therefore been the subject of many fruitful investigations (see, e.g., Refs.~\cite{AllahverdyanBalianNieuwenhuizen2004, Aberg2013, SkrzypczykShortPopescu2014, SkrzypczykSilvaBrunner2015}). This has provided useful insights into the role of coherence, correlations, and entanglement for work extraction~\cite{GemmerAnders2015, KammerlanderAnders2016, KorzekwaLostaglioOppenheimJennings2016, PerarnauLlobetHovhannisyanHuberSkrzypczykBrunnerAcin2015}, and conversely, about the work cost for establishing correlations~\cite{HuberPerarnauHovhannisyanSkrzypczykKloecklBrunnerAcin2015,BruschiPerarnauLlobetFriisHovhannisyanHuber2015,FriisHuberPerarnauLlobet2016} and coherence~\cite{MisraSinghBhattacharyaPati2016}.

However, the unitary operations that extract the maximal amount of work from a given non-passive state may be difficult to realize in practice. For example, the global entangling unitaries required to extract work from passive but not completely passive states are not feasible under realistic conditions, which already leads to~a discrepancy between theoretically extractable work and what is practically achievable. Consequently, the characterization of states as non-completely passive may fail to accurately represent how useful a quantum system is for thermodynamical work extraction in practice. It is hence of interest to study the \emph{ergotropy}~\cite{AllahverdyanBalianNieuwenhuizen2004}, i.e., the maximal amount of work that can be unitarily extracted in a Hamiltonian process, under the restriction to practically realizable transformations. An important example of such a constraint in continuous variable systems is the class of Gaussian unitary operations, which, although being standard operations in quantum optical systems (see, e.g., Refs.~\cite{Olivares2012, Weedbrooketal2012}), represents a significant restriction of the set of all possible transformations. This manifests, amongst other things, in the fact that Gaussian operations are not universal for quantum computation~\cite{LloydBraunstein1999}.

It is the aim of this paper to investigate this interesting dichotomy between what is possible in principle and what is practically feasible within quantum thermodynamics, focusing on Gaussian unitary transformations. We provide a full characterization of Gaussian passivity for multi-mode states, i.e., we give necessary and sufficient conditions to identify all (not necessarily Gaussian) quantum states from which no work can be extracted using Gaussian transformations. This characterization only requires knowledge of the first and second statistical moments of the state in question, independently of whether the state itself is Gaussian or not, and immediately provides a protocol for optimal Gaussian work extraction given any number of modes. Finally, we show that the gap between passivity and Gaussian passivity can be maximal if only the first and second moments of the state are known. That is, we show that the first and second moments of any Gaussian-passive state (which need not be a Gaussian state, and hence is not to be confused with a passive Gaussian state) are in principle compatible with those of~a state of maximal ergotropy, even under entropy constraints.

To examine the usefulness of Gaussian transformations in the context of thermodynamic work extraction, we first review the notion of passivity in Section~\ref{sec:passive states}. In Section~\ref{sec:Gaussian passive states} we introduce the notion of Gaussian passivity and formulate our main result, the characterization of all Gaussian-passive states, before examining the ergotropic gap in Section~\ref{sec:gap between passivity and Gaussian passivity}. In Section~\ref{sec:discussion} we finally discuss the consequences and applications of our results.

\section{Passive states}\label{sec:passive states}

On an elementary level, controlled engines perform their tasks based on repeated cycles, during which the system dynamics are typically changed through an external control. The resulting time evolution of the system is hence governed by~a time-dependent Hamiltonian $H(t)$. At the end of each cycle of duration $\tau$ the system is returned to its initial Hamiltonian, i.e., $H(n\tau)=H(0)\equiv H_{0}$ for any integer $n$, leading to unitary dynamics perturbed only by the necessary interactions with the environment. In this sense the unitary orbits of the input quantum states determine the fundamental limits of operation of cyclic machines, which is one of the reasons that unitary work extraction has recently attracted attention~\cite{HovhannisyanPerarnauLlobetHuberAcin2013, AlickiFannes2013, PerarnauLlobetHovhannisyanHuberSkrzypczykBrunnerAcin2015, PerarnauLlobetHovhannisyanHuberSkrzypczykTuraAcin2015, GelbwaserKlimovskyKurizki2015, LevyDiosiKosloff2016}.

Within this paradigm, the elementary processes that we consider here can in principle generate work if the average energy of a given system can be reduced by unitary operations. That is, if for a~system with Hilbert space~$\mathcal{H}$ described by~a density operator~$\rho\in L(\mathcal{H})$ there exists~a unitary operator $U\in L(\mathcal{H})$, such that
\begin{align}
    \tilde{E}   &=\,\tr\bigl(H_{0}\hspace*{1pt}U\rho\hspace*{1pt} U^{\dagger}\bigr)\,<\,\tr\bigl(H_{0}\hspace*{1pt}\rho \bigr)\,=\,E\,.
\end{align}		
States for which the average energy cannot be reduced by unitaries are called \emph{passive}. All passive states $\rho_{\mathrm{pass}}$ are diagonal in the eigenbasis $\{\ket{n}\}$ of $H_{0}$ with probability weights decreasing (not necessarily strictly) with increasing energy~\cite{PuszWoronowicz1978}, that is, passive states $\rho_{\mathrm{pass}}$ can be written as
\begin{align}
    \rho_{\mathrm{pass}}   &=\,\sum\limits_{n=0}^{d-1}p_{n}\,\ket{n}\!\!\bra{n}\,,
    \label{eq:passive states}
\end{align}
where $p_{n}\leq p_{m}$ when $E_{n}\geq E_{m}$, and $d=\dim(\mathcal{H})$, and, as usual, $0\leq p_{n}\leq1$ and $\sum_{n}p_{n}=1$, while the energy eigenstates satisfy
\begin{align}
    H_{0}\,\ket{n} &=\,E_{n}\,\ket{n}\,.
\end{align}
To see this, simply consider a two-dimensional subspace spanned by $\ket{m}$ and $\ket{n}$ with $E_{m}<E_{n}$. To decrease the average energy using a unitary on this subspace, the corresponding probability weights must satisfy $p_{m}<p_{n}$. Any state for which this isn't the case for any two-dimensional subspace is passive, and is of the form of Eq.~(\ref{eq:passive states}).

An example for~a passive state in continuous variable systems is~a product state of two thermal states of two bosonic modes at the same frequency~$\omega$ and temperature $T=1/\beta$ (in units where $\hbar=k\subtiny{0}{0}{\mathrm{B}}=1$). A bosonic mode is represented by annihilation and creation operators $a$ and $a^{\dagger}$, satisfying $\comm{a}{a^{\dagger}}=1$, and a Hamiltonian $H_{0}=\omega\,a^{\dagger}a$. The operator $a$ annihilates the vacuum state, i.e., $a\ket{0}=0$, and the eigenstates of the Hamiltonian, the Fock states, are obtained by applying the creation operators $\ket{n}=(a^{\dagger})^{n}/\sqrt{n!}\ket{0}$. A thermal state $\tau(\beta)$ of temperature $T=1/\beta$ is given by
\begin{align}
    \tau(\beta) &=\,\frac{e^{-\beta H_{0}}}{\mathcal{Z}}\,,
    \label{eq:thermal state}
\end{align}
where $\mathcal{Z}=\tr(e^{-\beta H_{0}})$ is the partition function. In the Fock basis the thermal state reads
\begin{align}
    \tau(\beta) &=\,(1-e^{-\beta\omega})\sum_{n}\,e^{-n\beta\omega}\,\ket{n}\!\!\bra{n}\,.
    \label{eq:single mode thermal states}
\end{align}
Since thermal states are the only completely passive states, any resource state for a~cyclic engine must be out of thermal equilibrium. The most elementary engine is hence a heat engine, which only needs access to two thermal baths at different temperatures. This is arguably the simplest out-of-equilibrium resource: Given two thermal baths at equal temperature, increasing the temperature of one of them can be achieved by increasing the energy of one of the systems without requiring any knowledge of its microstates.

At this point, it seems prudent to restate the above observation about these elementary heat engines in a more technical manner by reminding the reader of the elementary fact that, for two bosonic modes at the same frequency, the product state of two thermal states with different temperatures is not passive. Since we will refer to it later in the manuscript, it is instructive to examine one potential strategy to prove this statement. Consider two bosonic modes at the same frequency $\omega$, with temperatures $T_{a}$ and $T_{b}>T_{a}$, respectively. The product state of the two thermal states is
\begin{align}
    \tau(T_{a},T_{b})   &=\,
    (1-e^{-\omega/T_{a}})(1-e^{-\omega/T_{b}})\nonumber\\
    &\times\sum_{m,n}\,e^{-\omega(\tfrac{m}{T_{a}}+\tfrac{n}{T_{b}})}\,\ket{m}\!\bra{m}\otimes\ket{n}\!\bra{n}.
    \label{eq:two mode thermal state}
\end{align}
Up to a common prefactor, the matrix elements are
\begin{align}
    e^{-\omega(\tfrac{m}{T_{a}}+\tfrac{n}{T_{b}})}  &=\,
    e^{-\tfrac{\omega}{T_{a}T_{b}}(m T_{b}+n T_{a})}\,.
\end{align}
The state is not passive if there exist pairs of non-negative integers $m$, $n$ and $m\pr$, $n\pr$, such that
\begin{align}
    m \,T_{b}\,+\,n\, T_{a} &>\,m\pr\, T_{b}\,+\,n\pr\, T_{a}\,,
\end{align}
while $m\pr+n\pr>m+n\,$.
Now consider, e.g., the case where $m=n=x/2$, while $m\pr=0$ and $n\pr=x+1$ for some even non-negative integer $x$. In this case, the second inequality is trivially true for all $x$ and the first condition is
\begin{align}
    x(T_{a}\,+\,T_{b})/2 &>\, (x\,+\,1)\,T_{a}\,,
\end{align}
which implies $x>2T_{a}/(T_{b}-T_{a})>0$. So by picking $x$ large enough, one can always find a two-dimensional subspace, in which a unitary can reduce the average energy, proving that the state $\tau(T_{a},T_{b})$ of Eq.~(\ref{eq:two mode thermal state}) is not passive.

However, practically realizing these unitaries on arbitrary (two-dimensional) subspaces of the overall Hilbert space may prove to be practically unachievable. Therefore, it is of interest to investigate the limitations of work extraction by operations that can be feasibly implemented.

\section{Gaussian-passive states}\label{sec:Gaussian passive states}

While the fact that $\tau(T_{a},T_{b})$ of Eq.~(\ref{eq:two mode thermal state}) is not passive in principle allows the construction of a heat engine the necessary unitaries to extract energy may be difficult to realize in practice. A set of operations that are in general easier to implement are Gaussian unitaries, which are generated by Hamiltonians that are at most quadratic in the mode operators, and transform Gaussian states to Gaussian states. These, in turn, are states whose characteristic Wigner function is Gaussian (see, e.g., Ref.~\cite{Weedbrooketal2012} for a detailed review). Such states are fully described by their first and second statistical moments, that is, the expectation values of linear and quadratic combinations of the quadrature operators $\mathds{X}_{i}$, where $\mathds{X}_{2n-1}=\bigl(a_{n}+a_{n}^{\dagger}\bigr)/\sqrt{2}$ and $\mathds{X}_{2n}=-i\bigl(a_{n}-a_{n}^{\dagger}\bigr)/\sqrt{2}$, and $n=1,2,\ldots,N$ labels the $N$ modes of the system in question. The second moments are collected in the real, symmetric, and positive definite $2N\times2N$ covariance matrix $\Gamma$, with components
\begin{align}
    \Gamma_{ij} &=\,\expval{\mathds{X}_{i}\mathds{X}_{j}+\mathds{X}_{j}\mathds{X}_{i}}\,-\,2\expval{\mathds{X}_{i}}\expval{\mathds{X}_{j}}\,,
\end{align}
where $\expval{\!A\!}=\tr(A\hspace*{1pt}\rho)$ is the expectation value of the operator $A$ in the state $\rho$. For example, the thermal state of a single bosonic mode that we have encountered in Eq.~(\ref{eq:single mode thermal states}) belongs to the class of Gaussian states and is of particular interest for us here. Its first moments vanish, $\expval{\nl\mathds{X}_{i}\nl}=0$, while the covariance matrix is given by $\Gamma_{\!\mathrm{thermal}}=\coth\bigl(\beta\omega/2\bigr)\mathds{1}$.

Gaussian unitaries are described by affine maps $(S,\xi):\mathds{X}\mapsto S\mathds{X}+\xi$, where $\xi\in\mathbb{R}^{2N}$ represent phase space displacements, and $S$ are real, symplectic matrices. By definition, a symplectic operation $S$ leaves invariant the symplectic form $\Omega$ with components  $\Omega_{mn}=i\comm{\mathds{X}_{m}}{\mathds{X}_{n}}=\delta_{m,n-1}-\delta_{n,m+1}$, i.e.,
\begin{align}
    S\,\Omega\,S^{T}    &=\,\Omega\,.
\end{align}
Displacements, represented by the unitary Weyl operators $D(\xi)=\exp\bigl(i\sqrt{2}\mathds{X}^{T}\Omega\xi\bigr)$, do not affect the covariance matrix but rather shift the first moments. While all of the mentioned transformations preserve the Gaussian character of Gaussian states, one can of course also consider the effects of Gaussian transformations on any arbitrary state via the effect on the corresponding covariance matrix and vector of first moments. We are now interested in determining for which (not necessarily Gaussian) states of two noninteracting bosonic modes with frequencies $\omega_{a}$ and $\omega_{b}$ (w.l.o.g. we assume $\omega_{b}\geq\omega_{a}$), energy can be extracted using only Gaussian operations. In analogy to the previous terminology we call states \emph{Gaussian-passive} if their average energy cannot be reduced by Gaussian unitaries. The first important step in analyzing this property is the ability to identify Gaussian-passive states, which is established by the following theorem.

\begin{theorem}\label{theorem:Gaussian passive states}
Any (not necessarily Gaussian) state of two noninteracting bosonic modes with frequencies $\omega_{a}$ and $\omega_{b}\geq\omega_{a}$
is Gaussian-passive if and only if its first moments vanish, $\expval{\mathbb{X}}=0$, and its covariance matrix $\Gamma$ is either
\begin{enumerate}[(i)]
\item{\label{Theorem I cond i} in Williamson normal form~\emph{\cite{Williamson1936}}
\begin{align*}
    \Gamma  &=\,\diag\{\nu_{a},\nu_{a},\nu_{b},\nu_{b}\}
\end{align*}
with $\nu_{a}\geq\nu_{b}$ for $\omega_{a}<\omega_{b}$. Or, in the case of equal frequencies $\omega_a=\omega_b$, the state may also be}
\item{\label{Theorem I cond ii} in standard form~\emph{\cite{DuanGiedkeCiracZoller2000,Simon2000}}
\begin{align*}
    \Gamma  &=\,\begin{pmatrix}a\mathds{1}  &   C   \\  C  &   b\mathds{1}   \end{pmatrix}
\end{align*}
with $C=c\mathds{1}$, if $\omega_{a}=\omega_{b}$.}
\end{enumerate}
\end{theorem}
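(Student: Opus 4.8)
The plan is to reduce Gaussian passivity to a finite-dimensional minimisation over $Sp(4,\mathbb{R})$ and solve it by a first-order analysis. Writing $H_{0}=\omega_{a}a^{\dagger}a+\omega_{b}b^{\dagger}b$ in terms of the quadratures, the mean energy of an arbitrary (not necessarily Gaussian) state equals, up to an additive constant, $\expval{H_{0}}=\tfrac14\tr(W\Gamma)+\tfrac12\expval{\mathds{X}}^{T}W\expval{\mathds{X}}$ with $W=\diag\{\omega_{a},\omega_{a},\omega_{b},\omega_{b}\}>0$; since this depends only on the first two moments, Gaussian passivity is decided by $(\expval{\mathds{X}},\Gamma)$ alone. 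A Gaussian unitary $(S,\xi)$ sends $\expval{\mathds{X}}\mapsto S\expval{\mathds{X}}+\xi$ and $\Gamma\mapsto S\Gamma S^{T}$, and since $W>0$ the choice $\xi=-S\expval{\mathds{X}}$ strictly lowers the energy unless $\expval{\mathds{X}}=0$; hence vanishing first moments is necessary, and thereafter only the symplectic part matters. By Williamson's theorem the symplectic orbit of $\Gamma$ equals that of its normal form $D=\diag\{\nu_{a},\nu_{a},\nu_{b},\nu_{b}\}$, so the question reduces to minimising $f(\Gamma')=\tr(W\Gamma')$ over the orbit $\mathcal{O}=\{S D S^{T}:S\in Sp(4,\mathbb{R})\}$, and $\Gamma$ is Gaussian-passive precisely when it attains the minimum of $f$ on its own orbit.

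Next, $f$ is coercive on $\mathcal{O}$ (as $f\geq\lambda_{\min}(W)\tr\Gamma'$), and $\mathcal{O}$ is closed: it is the set of positive matrices with prescribed values of the two symplectic invariants $\det\Gamma'$ and $\Delta=\det\gamma_{A}'+\det\gamma_{B}'+2\det C$, and since the determinant is pinned to the positive value $\nu_{a}^{2}\nu_{b}^{2}$, any limit of such matrices stays positive definite; hence the minimum is attained and is therefore a critical point of $f$ restricted to $\mathcal{O}$. Differentiating $f$ along the orbit (tangent vectors $\delta\Gamma'=Y\Gamma'+\Gamma'Y^{T}$ with $Y$ in the symplectic Lie algebra), the stationarity condition works out to
\begin{align*}
    \comm{\Gamma'}{\,W\Omega\,} &=\,0\,,
\end{align*}
where $W\Omega$ has the block-diagonal form $\diag\{\omega_{a}\Omega_{0},\omega_{b}\Omega_{0}\}$ with $\Omega_{0}$ the single-mode symplectic unit. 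In $2\times2$ block notation this forces the reduced blocks to be isotropic, $\gamma_{A}'=\alpha\id$ and $\gamma_{B}'=\beta\id$, while the off-diagonal block satisfies $\omega_{b}C\Omega_{0}=\omega_{a}\Omega_{0}C$; decomposing $C$ into its parts commuting and anticommuting with $\Omega_{0}$, this equation forces $C=0$ whenever $\omega_{a}\neq\omega_{b}$, and $C\in\mathrm{span}\{\id,\Omega_{0}\}$ when $\omega_{a}=\omega_{b}$.

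The theorem follows. For $\omega_{a}<\omega_{b}$ the critical points are just $\diag\{\nu_{a},\nu_{a},\nu_{b},\nu_{b}\}$ and its mode-swap $\diag\{\nu_{b},\nu_{b},\nu_{a},\nu_{a}\}$, with $f$ equal to $2\omega_{a}\nu_{a}+2\omega_{b}\nu_{b}$ and $2\omega_{a}\nu_{b}+2\omega_{b}\nu_{a}$ respectively; the minimum is the smaller of the two, so $\Gamma$ is Gaussian-passive iff $\Gamma=\diag\{\nu_{a},\nu_{a},\nu_{b},\nu_{b}\}$ with $\nu_{a}\geq\nu_{b}$, which is condition~(\ref{Theorem I cond i}); equivalently, since the mode-swap is itself a Gaussian unitary, Gaussian passivity of a diagonal critical point already forces $\nu_{a}\geq\nu_{b}$ directly. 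For $\omega_{a}=\omega_{b}$ one has $W\propto\id$, so $f\propto\tr\Gamma'$, and a short computation of the symplectic invariants shows $\tr\Gamma'=2(\nu_{a}+\nu_{b})$ at every critical point; hence all critical points — the matrices with diagonal blocks $\alpha\id$, $\beta\id$ and off-diagonal block in $\mathrm{span}\{\id,\Omega_{0}\}$ — are global minima, and $\Gamma$ is Gaussian-passive iff it is of this form. Applying the mode-wise phase rotation that turns the off-diagonal block into a multiple of $\id$ (an energy-preserving, hence passivity-preserving, operation when $\omega_{a}=\omega_{b}$) brings $\Gamma$ to the standard form with $C=c\id$ — condition~(\ref{Theorem I cond ii}), which contains~(\ref{Theorem I cond i}) as the case $c=0$. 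Conversely, any $\Gamma$ of the listed forms solves the stationarity equation and realises the orbit minimum, so it is Gaussian-passive.

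The main obstacles are, first, establishing that the infimum over the non-compact orbit is attained — the squeezings run off to infinity, so one needs the coercivity-plus-closedness argument above to reduce to a critical-point problem — and, second, solving the stationarity equation and, in the degenerate-frequency case, recognising its solution set as the $C=c\id$ standard form up to local rotations. A quick independent check on the minimal \emph{value} avoids the critical-point machinery: for any covariance matrix $V$ the sum of its symplectic eigenvalues is half the sum of the moduli of the eigenvalues of $i\Omega V$, which by Weyl's majorant inequality is at most $\tfrac12\|i\Omega V\|_{1}=\tfrac12\tr V$ (using $\Omega^{T}\Omega=\id$), so $\tr V\geq2(\nu_{1}+\nu_{2})$; combined with the standard bound $\det\gamma_{A}',\det\gamma_{B}'\geq\nu_{\min}^{2}$ on the reduced blocks this reproduces $\min_{S}\tr(WSDS^{T})=2\omega_{a}\nu_{\max}+2\omega_{b}\nu_{\min}$, although the characterisation of the minimisers still requires the stationarity analysis.
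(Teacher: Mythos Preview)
Your argument is correct and takes a genuinely different route from the paper. The paper proceeds constructively: it exhibits an explicit four-step protocol (displacements, local symplectics to reach standard form, two-mode squeezing to equalise the off-diagonal entries, then beam splitting), verifying at each stage that energy strictly decreases unless the relevant normal form is already reached. Your approach is variational: you reduce the problem to minimising the linear functional $\tr(W\Gamma')$ over the symplectic orbit, establish existence of a minimiser via coercivity and closedness of the orbit (the latter through the symplectic invariants $\det\Gamma'$ and $\Delta$), and then characterise all critical points by the elegant stationarity condition $[\Gamma',W\Omega]=0$, which you solve blockwise.

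The trade-offs are clear. The paper's proof doubles as an operational work-extraction recipe, which is one of its stated goals; yours does not directly provide one. Conversely, your argument is more structural and generalises almost verbatim to arbitrary quadratic Hamiltonians (just replace $W$) and to $N$ modes (the commutant of $W\Omega$ is still easy to describe), whereas the paper's step-by-step reduction is tailored to two modes. You also make explicit a point the paper only remarks on in passing: for $\omega_a=\omega_b$ the full set of minimisers has $C\in\mathrm{span}\{\id,\Omega_0\}$, slightly larger than the stated $C=c\id$, and you correctly note that the two descriptions agree up to energy-preserving local phase rotations. Your closing ``independent check'' is sound --- the bound $\det\gamma_B'\geq\nu_{\min}^2$ is the symplectic interlacing inequality for principal submatrices, and combining it with $\tr\Gamma'\geq 2(\nu_a+\nu_b)$ via $\tr(W\Gamma')=\omega_a\tr\Gamma'+(\omega_b-\omega_a)\tr\gamma_B'$ indeed reproduces the minimal value --- though, as you say, it does not by itself pin down the minimisers.
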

\begin{proof}
To prove Theorem~\ref{theorem:Gaussian passive states}, we will proceed in the following way. We will start with the most general combination of first moments $\expval{\mathbb{X}}$ and second moments $\Gamma$ that any initial state may have, before successively applying Gaussian operations (steps \hyperlink{proof displacements}{P1}-\hyperlink{proof BS}{P4}) to reduce the average energy. When we reach a state whose energy cannot be further reduced by Gaussian unitaries, we compare its energy to that of the initial state and identify under which conditions the energy has been lowered with respect to the initial state. These conditions will finally result in the identification of the characteristics of Gaussian-passive states as stated in clauses~(\ref{Theorem I cond i}) and~(\ref{Theorem I cond ii}) above.
    \begin{enumerate}[(P1)]
    \item{\hypertarget{proof displacements}
    \textbf{Displacements}:\
    As we consider noninteracting bosonic modes, the average energy of a two-mode state is given by the sum of the average energies of the individual modes. For a single mode with frequency $\omega_{a}$ and ladder operators $a$ and $a^{\dagger}$, a state described by the density operator $\rho$ has the average energy $E(\rho)=\omega_{a}\tr(\rho\hspace*{0.5mm}a^{\dagger}a)$, which can be written in terms of the state's covariance matrix $\Gamma_{a}$ and vector of first moments $\expval{\mathbb{X}_{a}}$ as
    \begin{align}
        \hspace*{5mm}E(\rho)  &=\,
        \omega_{a}\Bigl(\frac{1}{4}\bigl[\tr(\Gamma_{a})-2\bigr]+\frac{1}{4}\dv\!\expval{\nl\mathbb{X}_{a}\nl\!}\!\dv^{2}\Bigr)\,,
        \label{eq:energy first and second moments}
    \end{align}
    where $\dv \cdot\dv$ is the (Euclidean) norm. Since displacements change the first moments but leave the second moments invariant, the energy of the state can always be decreased by shifting $\expval{\nl\mathbb{X}_{a}\nl\!}$ to the null vector. Conversely, every state with nonvanishing first moments cannot be Gaussian-passive, since its energy can be lowered by appropriate displacements. From here on we may hence consider only states for which the energy has been reduced by displacements as much as possible, such that for each mode one has $\expval{\nl\mathbb{X}_{a}\nl\!}=0$. In the following, one may then apply Gaussian unitaries represented by symplectic transformations $S$, which leave the zero first moments invariant.}
    \item{\textbf{Local symplectic operations}:\
    In the next step, we note that every two-mode covariance matrix $\Gamma$ can be brought to the standard form~$\Gamma_{\mathrm{st}}$~\cite{DuanGiedkeCiracZoller2000,Simon2000} by local symplectic operations $S_{\mathrm{loc}}=S_{\mathrm{loc},a}\oplus S_{\mathrm{loc},b}$, that is,
        \begin{align}
            S_{\mathrm{loc}}\,\Gamma\,S_{\mathrm{loc}}^{T}  &=\,\Gamma_{\mathrm{st}}\,=\,
            \begin{pmatrix}
                a\,\mathds{1}   &   C   \\  C   &   b\,\mathds{1}
            \end{pmatrix}\,,
            \label{eq:cov matrix standard form}
        \end{align}
        where $C=\diag\{c_{1},c_{2}\}$. Each of the single-mode symplectic operations $S_{\mathrm{loc},i}$ $(i=a,b)$ can be decomposed into phase rotations and single-mode squeezing as
    \begin{align}
        S_{\mathrm{loc},i}    &=\,R(\theta_{i})\,S(r_{i})\,R(\phi_{i})\,.
    \end{align}
    For some real angles $\theta_{i}$ and $\phi_{i}$, and real squeezing parameters $r_{i}$, these local operations take the form
    \begin{align}
        \ \ \ \ \ R(\theta_{i})=\begin{pmatrix}   \cos\theta_{i}  &   \sin\theta_{i}  \\  -\sin\theta_{i} &   \cos\theta_{i}  \end{pmatrix}\,,\ \
        S(r_{i})    \,=\,\begin{pmatrix}   e^{-r_{i}}  &   0  \\  0 &   e^{r_{i}}  \end{pmatrix}\,.
    \end{align}
    Conversely, this means that we can write the covariance matrix $\Gamma$ as
    \begin{align}
        \Gamma  &=\,(S_{\mathrm{loc}}^{-1})\,\Gamma_{\mathrm{st}}\,(S_{\mathrm{loc}}^{-1})^{T}\,,
    \end{align}
    where the inverse operations are also local symplectic transformations $S_{\mathrm{loc}}^{-1}=S_{\mathrm{loc},a}^{-1}\oplus S_{\mathrm{loc},b}^{-1}$. The single-mode inverses $S_{\mathrm{loc},i}^{-1}$ are simply
    \begin{align}
        S_{\mathrm{loc},i}^{-1} &=\,R(-\phi_{i})\,S(-r_{i})\,R(-\theta_{i})\,.
    \end{align}
    This allows us to express the energy of the state described by the covariance matrix $\Gamma$ as
    \begin{align}
        \ \ \ \ \ E(\Gamma)   &=\,\frac{\omega_{a}}{2}\bigl(a\cosh(2r_{a})-1\bigr)+\frac{\omega_{b}}{2}\bigl(b\cosh(2r_{b})-1\bigr)\,.
    \end{align}
    Since $\cosh(2r_{i})\geq1$, it becomes clear that the energy of a state with covariance matrix $\Gamma$ can be lowered by local symplectic operations until $\Gamma$ reaches the standard form. Consequently, states for which $\Gamma\neq\Gamma_{\mathrm{st}}$ are not Gaussian-passive, whereas states with covariance matrices in the standard form may still have energy that can be reduced by global symplectic transformations.}
    \item{\hypertarget{proof TMS}\textbf{Two-mode squeezing}:\ After using local Gaussian operations to extract as much energy as possible, one is hence left with a state whose covariance matrix is in the standard form of Eq.~(\ref{eq:cov matrix standard form}). The local covariance matrices of each mode are then proportional to the identity, $a\mathds{1}$ and $b\mathds{1}$, but the off-diagonal block $C$ may have two different diagonal elements $c_{1}$ and $c_{2}$. If this is the case, we can apply a two-mode squeezing operation to reduce the energy and bring the covariance matrix to a form in which the off-diagonal block is proportional to the identity as well. The symplectic representation of this global transformation is
        \begin{align}
            S_{\mathrm{TMS}}    &=\,\begin{pmatrix}
            \cosh(r)\mathds{1}  &   \sinh(r)\sigma_{z}\\
            \sinh(r)\sigma_{z}  &   \cosh(r)\mathds{1}
            \end{pmatrix}\,,
        \end{align}
        where $\sigma_{z}=\diag\{1,-1\}$ is the usual Pauli matrix and the squeezing parameter $r$ that achieves equal off-diagonal elements is given by
        \begin{align}
            r   &=\,-\tfrac{1}{2}\artanh\Bigl(\frac{c_{1}-c_{2}}{a+b}\Bigr)\,.
            \label{eq:TMS parameter minimum}
        \end{align}
        To show that this transformation always reduces the average energy, we compute the energy $E(\tilde{\Gamma})$ associated to the two-mode squeezed covariance matrix $\tilde{\Gamma}=S_{\mathrm{TMS}}\Gamma_{\mathrm{st}}S_{\mathrm{TMS}}^{T}$ and find
        \begin{align}
            \hspace*{7mm}E(\tilde{\Gamma})   &=\frac{\omega_{a}}{2}\bigl[a\cosh^{2}(r)+b\sinh^{2}(r)\bigr]\nonumber\\
            &\ +\frac{\omega_{b}}{2}\bigl[b\cosh^{2}(r)+a\sinh^{2}(r)\bigr]\nonumber\\
            &\ +\frac{\omega_{a}+\omega_{b}}{4}\bigl[(c_{1}-c_{2})\sinh(2r)-1\bigr]\,.
        \end{align}
        We then take the derivative with respect to $r$ and set $\partial E(\tilde{\Gamma})/\partial r=0$, which provides the condition
        \begin{align}
            \hspace*{7mm}(a+b)\sinh(2r)+(c_{1}-c_{2})\cosh(2r)   &=\,0\,,
        \end{align}
        which in turn is solved by $r$ from Eq.~(\ref{eq:TMS parameter minimum}). It is then easy to check that for this value of $r$ we have $\partial^{2} E(\tilde{\Gamma})/\partial r^{2}>0$, indicating that the energy is minimal for the specified value of the squeezing parameter. The two-mode squeezing transformation with this strength hence reduces the energy. While the off-diagonal block of the covariance matrix is proportional to the identity after this operation, the local covariance matrices are generally not of this form, albeit still being diagonal. We can then use local rotations $R(\vartheta,\varphi)=R(\vartheta)\oplus R(\varphi)$, which leave the energy invariant, to bring the covariance matrix back to the standard form where every $2\times2$ subblock is now proportional to the identity, i.e.,
        \begin{align}
            \widehat{\Gamma}\,=\,R(\vartheta,\varphi)\,\tilde{\Gamma}\,R^{T}\!(\vartheta,\varphi)  &=\,\begin{pmatrix}
                \tilde{a}\,\mathds{1}   &   c\,\mathds{1}   \\  c\,\mathds{1}   &   \tilde{b}\,\mathds{1}
            \end{pmatrix}\,.
            \label{eq:all blocks prop identity cov matrix}
        \end{align}
        In some circumstances, the third step of the protocol can be seen as the conversion of Gaussian entanglement into work. Note that the previous two steps consist of local unitaries, and hence leave any entanglement measure invariant. If the initial state is a Gaussian state, the form of $\widehat{\Gamma}$ in Eq.~(\ref{eq:all blocks prop identity cov matrix}) further indicates that no more entanglement is present after step~\hyperlink{proof TMS}{P3}, since a nonnegative determinant of the $2\times2$ off-diagonal block is a sufficient separability criterion for two-mode Gaussian states~\cite{Simon2000}. For any Gaussian state, the presence of entanglement hence indicates that the energy can be lowered by Gaussian unitaries in the third step. However, the fact that the energy of a Gaussian state can be lowered in step~\hyperlink{proof TMS}{P3}, does not imply that the initial state is entangled~\cite{BruschiPerarnauLlobetFriisHovhannisyanHuber2015}. Moreover, if the initial state is not Gaussian, the final state after step~\hyperlink{proof TMS}{P3} may still be entangled in general.}
    \item{\hypertarget{proof BS}
        \textbf{Beam splitting}:\ Having reached a state with a covariance matrix as in Eq.~(\ref{eq:all blocks prop identity cov matrix}), we have exhausted all local Gaussian operations as well as two-mode squeezing to lower the energy. In particular, at this point we know that applying any local or global squeezing transformation can only increase the energy. This leaves only the beam splitting transformation as a last Gaussian unitary that we still have at our disposal. This transformation, represented by the global orthogonal symplectic matrix
        \begin{align}
            S_{\mathrm{BS}}(\theta) &=\,\begin{pmatrix} \cos(\theta)\,\mathds{1}  &   \sin(\theta)\,\mathds{1} \\ \sin(\theta)\,\mathds{1}  &   -\cos(\theta)\,\mathds{1}\end{pmatrix}
        \end{align}
        for real values of $\theta$, is an optically passive transformation. That is, it leaves the average excitation number unchanged. If the frequencies of the two modes are the same, $\omega_{a}=\omega_{b}$, then such a transformation obviously also leaves the average energy unchanged. In this case, the energy of the state cannot be further lowered by any Gaussian unitary and we conclude that the state is hence Gaussian-passive, which proves clause~(\ref{Theorem I cond ii}) of Theorem~\ref{theorem:Gaussian passive states}.

        If the frequencies are not the same we may assume w.l.o.g. that $\omega_{a}<\omega_{b}$. Then, the energy can be lowered by shifting as many excitations as possible to the lower frequency mode. To prove this rigorously, we compute the average energy of
        \begin{align}
            \Gamma_{\mathrm{GP}}    &=\,S_{\mathrm{BS}}(\theta)\,\widehat{\Gamma}\,S_{\mathrm{BS}}^{T}(\theta)\,,
        \end{align}
        for which we find
        \begin{align}
            &\ \hspace*{7mm} E(\Gamma_{\mathrm{GP}}) =\frac{\omega_{a}}{2}\bigl[a\cos^{2}\!(\theta)+b\sin^{2}\!(\theta)+c\sin(2\theta)-1\bigr]\nonumber\\
            &\ \hspace*{7mm} +\frac{\omega_{b}}{2}\bigl[b\cos^{2}\!(\theta)+a\sin^{2}\!(\theta)-\sin(2\theta)-1\bigr].
        \end{align}
        Similarly as for the two-mode squeezing we then set $\partial E(\Gamma_{\mathrm{GP}})\partial\theta=0$ and find that the energy is minimized when
        \begin{align}
            \theta  &=\,\begin{cases}
            \tfrac{1}{2}\arctan(\tfrac{2c}{a-b}) &   \mbox{if}\ a\geq b\\[1mm]
            \tfrac{1}{2}\arctan(\tfrac{2c}{a-b})+\tfrac{\pi}{2} &   \mbox{if}\ a<b
            \end{cases}\,.
            \label{eq:beam splitter angle}
        \end{align}
        The resulting covariance matrix $\Gamma_{\mathrm{GP}}=\diag\{\nu_{a},\nu_{a},\nu_{b},\nu_{b}\}$ is in Williamson normal form~\cite{Williamson1936}, its eigenvalues coincide with its symplectic eigenvalues, and the lower frequency mode now has the higher population, $\nu_{a}\geq\nu_{b}$. Any further Gaussian unitary applied to this final state would bring the covariance matrix (and/or the first moments) to a form that would allow reducing the energy via one (or several) of the steps~\hyperlink{proof displacements}{P1}-\hyperlink{proof BS}{P4}. The corresponding symplectic operations leave the symplectic eigenvalues invariant. Consequently, the second moments of the initial state uniquely determine the associated Gaussian-passive covariance matrix $\Gamma_{\mathrm{GP}}$. That is, $\Gamma_{\mathrm{GP}}$ is the only covariance matrix with symplectic spectrum $\{\nu_{a},\nu_{a},\nu_{b},\nu_{b}\}$ whose energy cannot be lowered by Gaussian unitaries. (If $\omega_{a}=\omega_{b}$, the Gaussian-passive covariance matrix is not unique, but is determined only up to arbitrary optically passive transformations.) We therefore arrive at the conclusion that the state associated to the covariance matrix $\Gamma_{\mathrm{GP}}$ is Gaussian-passive. Any state whose covariance matrix is not of this form can be subjected to one (or several) of the steps~\hyperlink{proof displacements}{P1}-\hyperlink{proof BS}{P4} to reduce its average energy, and is hence not Gaussian-passive, which concludes the proof.}
    \end{enumerate}
    \vspace*{-5mm}
\end{proof}
Note that for any given initial state (which need not be Gaussian), the corresponding Gaussian-passive state is not unique, because the operations~\hyperlink{proof displacements}{P1}-\hyperlink{proof BS}{P4} do not commute. For instance, applying the operations of step~\hyperlink{proof displacements}{P1} after any of the other steps leads to different final states that have the same first and second moments, and hence the same final energy. The symplectic eigenvalues of the initial state hence uniquely define the lowest energy that can be reached via Gaussian unitaries, but several (non-Gaussian) states (equivalent up to energy conserving Gaussian unitaries) may be compatible with the corresponding Gaussian-passive covariance matrix.

A corollary that follows immediately from Theorem~\ref{theorem:Gaussian passive states} concerns the extension to an arbitrary number of modes.
\begin{coro}
An arbitrary state of $n$ bosonic modes is Gaussian-passive if and only if all of its two-mode marginals are Gaussian-passive.
\end{coro}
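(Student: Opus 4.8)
The plan is to prove the two implications separately, in both cases using Theorem~\ref{theorem:Gaussian passive states} for the individual two-mode marginals together with the additivity $H_{0}=\sum_{k}H_{k}$ of the free Hamiltonian. For the ``only if'' direction I argue by contraposition: if the marginal $\rho_{ij}$ on some pair of modes $i,j$ is not Gaussian-passive, there is a two-mode Gaussian unitary on those modes that lowers $\tr[(H_{i}+H_{j})\rho_{ij}]$; extending it by the identity on the remaining modes gives a Gaussian unitary on the whole system which, since the modes do not interact, leaves the reduced states and hence the energies of all modes $k\neq i,j$ untouched while changing the energy of modes $i,j$ exactly as it changes that of $\rho_{ij}$. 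The total energy therefore strictly decreases, so the $n$-mode state is not Gaussian-passive.

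For the converse, assume every two-mode marginal is Gaussian-passive. Theorem~\ref{theorem:Gaussian passive states} applied to each pair yields: the first moments of every marginal, hence $\expval{\mathbb{X}}=0$ for the global state; each $2\times2$ diagonal block of the covariance matrix $\Gamma$ equals $\mu_{k}\mathds{1}$; for a pair of distinct frequencies the corresponding off-diagonal block vanishes and $\mu_{i}\geq\mu_{j}$ whenever $\omega_{i}<\omega_{j}$; and for a pair of equal frequencies the off-diagonal block is $c_{ij}\mathds{1}$. Thus $\Gamma$ is block-diagonal across frequency groups, with every $2\times2$ sub-block a scalar multiple of $\mathds{1}$ inside a group. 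Since displacements can only raise the energy once $\expval{\mathbb{X}}=0$, Gaussian passivity of the $n$-mode state is equivalent to $\tr(W S\Gamma S^{T})\geq\tr(W\Gamma)$ for all symplectic $S$, with $W=\bigoplus_{k}\omega_{k}\mathds{1}$ encoding $H_{0}$.

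The engine of this last step is a symplectic trace inequality: over the symplectic orbit of a covariance matrix, $\tr(W S\Gamma S^{T})$ is minimized exactly when $\Gamma$ is in Williamson form with its symplectic eigenvalues ordered oppositely to the frequencies, and the minimal value depends only on the symplectic spectrum. When all frequencies are distinct, the conditions above force $\Gamma=\bigoplus_{k}\mu_{k}\mathds{1}$ with the $\mu_{k}$ anti-sorted against the $\omega_{k}$, which is precisely the minimizing form, and the inequality closes the argument. I expect the main obstacle to be the degenerate case: within a group of equal-frequency modes the pairwise criteria pin $\Gamma$ only to the form $M\otimes\mathds{1}$ with $M$ real, symmetric and positive, and the cross-group inequalities constrain the \emph{diagonal} entries of the various $M$'s rather than their eigenvalues, which are what enters the trace inequality. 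Ruling out that an energy-neutral mixing inside such a group, composed with a permutation of modes across groups, could lower $E(\Gamma)$ — equivalently, pinning down which orderings of the global symplectic spectrum are genuinely forced by pairwise Gaussian passivity — is where the argument is most delicate.
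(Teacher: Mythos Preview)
Your ``only if'' direction is exactly right. For the ``if'' direction the paper takes a much shorter path than you do: it simply appeals to the fact that every $n$-mode Gaussian unitary decomposes into a product of one- and two-mode Gaussian unitaries, and infers directly that if no two-mode marginal admits an energy-lowering Gaussian unitary then neither does the full state. It does not pass through the structure of $\Gamma$ or any symplectic trace inequality.

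Your worry about the degenerate-frequency case, however, is not merely ``delicate'' --- it is fatal to the statement as written, and hence also to the paper's one-line argument. Take three modes with $\omega_{1}<\omega_{2}=\omega_{3}$, vanishing first moments, and
\begin{align*}
\Gamma\,=\,\begin{pmatrix}4\,\mathds{1}&0&0\\0&3\,\mathds{1}&2\,\mathds{1}\\0&2\,\mathds{1}&3\,\mathds{1}\end{pmatrix}\,.
\end{align*}
The $(1,2)$ and $(1,3)$ marginals are both $\diag\{4,4,3,3\}$, Gaussian-passive by clause~(\ref{Theorem I cond i}); the $(2,3)$ marginal is Gaussian-passive by clause~(\ref{Theorem I cond ii}). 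Yet an energy-neutral beam splitter on modes $2,3$ takes $\Gamma$ to $\diag\{4,4,5,5,1,1\}$, after which a beam splitter on modes $1,2$ lowers the energy by $\tfrac{1}{2}(\omega_{2}-\omega_{1})>0$. Thus the ``if'' direction fails once frequency degeneracies are allowed --- precisely the energy-neutral in-group mixing followed by a cross-group swap that you anticipated. The paper's decomposition argument hides the same gap: decomposability of an $n$-mode Gaussian unitary into two-mode factors does not guarantee that the first energy-lowering factor acts on a two-mode marginal of the \emph{original} state rather than of an intermediate one. Your line of argument is, on the other hand, complete and correct when all frequencies are pairwise distinct, since then the pairwise conditions force $\Gamma$ into the anti-sorted Williamson form and the symplectic trace inequality closes.
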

\begin{proof}
To prove this statement, simply note that all Gaussian unitaries can be decomposed into sequences of operations on one or two modes. Consequently, if a state admits no two-mode marginal whose energy can be lowered by Gaussian unitaries, then the overall state must be Gaussian-passive.
\end{proof}

An interesting example for a Gaussian-passive state of two modes with different frequencies is that of a product of single-mode thermal states, in which each mode has a different temperature. In this case the symplectic eigenvalues are $\nu_{i}=\coth(\frac{\omega_{i}}{2T_{i}})$ and for $T_{b}\neq0$ the condition $\nu_{a}>\nu_{b}$ for Gaussian passivity can be expressed as
\begin{align}
    \frac{\omega_{a}}{\omega_{b}}   &<\,\frac{T_{a}}{T_{b}}\,.
    \label{eq:gaussian passive thermal state condition}
\end{align}
Now, recall from Section~\ref{sec:passive states} that we know that within the framework of general operations the product states of two thermal states at different temperatures is not passive, regardless of the frequencies of the modes involved. However, as we have seen, such a state may nonetheless be Gaussian-passive depending on the relation between the local temperatures and frequencies.

Here, a word of caution is in order. Since (Gaussian) unitaries leave the purity $\tr(\rho^{2})=1/\sqrt{\det(\Gamma)}$ unchanged, one may be tempted to (falsely) conclude that the existence of (Gaussian) states that have the same purity as a given Gaussian-passive state but a lower average energy means that one may further reduce the energy of Gaussian-passive states beyond what is stated in Theorem~\ref{theorem:Gaussian passive states}. For example, for $\nu_{a}\nu_{b}<\omega_{b}/\omega_{a}$ the Gaussian state with covariance matrix
\begin{align}
    \Gamma\pr   &=\,\begin{pmatrix} \nu_{a}\nu_{b}\,\mathds{1}  &   0   \\  0   &   \mathds{1}\end{pmatrix}\,,
\end{align}
has the same purity as the Gaussian-passive state specified in clause~(\ref{Theorem I cond i}). However, such states cannot be reached by Gaussian unitaries if their symplectic eigenvalues ($\nu_{a}\nu_{b}$ and $1$ in the example) do not match those of the original state. In general, there may not even exist a non-Gaussian unitary (even if it preservers the Gaussian character of the specific state in question) that transforms the corresponding states into each other. Finally, note that all passive states are obviously Gaussian-passive, but the converse is not true.

\section{The gap between passivity and Gaussian passivity}\label{sec:gap between passivity and Gaussian passivity}

Given the characterization of a given state as Gaussian-passive, it is now natural to ask how much extractable energy is potentially sacrificed by the restriction to Gaussian unitary orbits, rather than general unitary transformations. Suppose that one only has knowledge of and access to the first and second moments of an arbitrary state of two bosonic modes. With this information, which can practically be easily obtained in several ways (see, e.g., Ref.~\cite{RuppertUsenkoFilip2016}), one may use Gaussian unitaries to lower the energy of the state until reaching a Gaussian-passive state. One may then wonder how much more energy could have been extracted if general unitary operations could be applied. The answer to this question of course depends on the particular state in question. So far, we have only fixed the first and second moments, which identifies states uniquely only if they are Gaussian. It is hence crucial to understand which (non-Gaussian) states are in general compatible with a given set of first and second moments. A first important observation can be phrased in the following lemma.

\begin{lemma}\label{lemma:Gaussian passive pure state}
The first and second moments of any Gaussian-passive state are compatible with a (non-Gaussian) pure state for which the entire energy is extractable by unitary transformations.
\end{lemma}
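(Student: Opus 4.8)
The plan is to separate the claim into two elementary facts and one explicit construction. The first fact is that the whole energy of \emph{any} pure state is unitarily extractable. Indeed, $H_{0}=\omega_{a}a^{\dagger}a+\omega_{b}b^{\dagger}b$ has the vacuum $\ket{0}$ as its unique ground state with energy $0$, and since any two pure states of a Hilbert space are related by a unitary there is a unitary sending $\ket{\psi}$ to $\ket{0}$, so that $\min_{V}\tr\bigl(H_{0}\,V\ket{\psi}\!\bra{\psi}V^{\dagger}\bigr)=0$; the ergotropy of $\ket{\psi}\!\bra{\psi}$ therefore equals its full energy $\tr\bigl(H_{0}\ket{\psi}\!\bra{\psi}\bigr)$. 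Hence it suffices to produce, for the first and second moments of an arbitrary Gaussian-passive state, a pure (and, generically, non-Gaussian) state with exactly those moments.

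The second fact normalizes the covariance matrix. By Theorem~\ref{theorem:Gaussian passive states}, a Gaussian-passive state has $\expval{\mathbb{X}}=0$ and a covariance matrix $\Gamma_{\mathrm{GP}}$ that, after a suitable Gaussian unitary $U_{\mathrm{G}}$ (without displacement, since the first moments already vanish), takes the Williamson form $\Gamma=\diag\{\nu_{a},\nu_{a},\nu_{b},\nu_{b}\}$ with $\nu_{a},\nu_{b}\geq1$: clause~(\ref{Theorem I cond i}) is already of this form, while clause~(\ref{Theorem I cond ii}) is brought to it by an orthogonal symplectic (beam-splitter-type) transformation diagonalizing the common $x$- and $p$-sectors of $\Gamma_{\mathrm{st}}$. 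Since $U_{\mathrm{G}}$ and its inverse preserve purity and the non-Gaussian character of a state, it is enough to exhibit a pure state with covariance matrix $\Gamma$ and vanishing first moments; applying $U_{\mathrm{G}}^{-1}$ then returns a pure state with the first and second moments of the original Gaussian-passive state.

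For the construction I would take the product $\ket{\psi_{a}}\otimes\ket{\psi_{b}}$ of single-mode states $\ket{\psi_{i}}=\sqrt{1-p_{i}}\,\ket{0}+\sqrt{p_{i}}\,\ket{n_{i}}$, with an odd integer $n_{i}\geq3$ and $p_{i}=(\nu_{i}-1)/(2n_{i})\in[0,1)$, which can be arranged for every $\nu_{i}\geq1$ by taking $n_{i}$ large enough. A one-line computation gives $\langle a_{i}\rangle=0$, $\langle a_{i}^{2}\rangle=0$ (using $n_{i}\geq3$) and $\langle a_{i}^{\dagger}a_{i}\rangle=p_{i}n_{i}=(\nu_{i}-1)/2$, so the first moments of $\ket{\psi_{i}}$ vanish and its covariance matrix equals $\nu_{i}\mathds{1}$; hence $\ket{\psi_{a}}\otimes\ket{\psi_{b}}$ has vanishing first moments and covariance $\diag\{\nu_{a},\nu_{a},\nu_{b},\nu_{b}\}$. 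Each factor is supported on just two Fock levels, one of odd index, whereas a pure single-mode Gaussian state with vanishing first moments is a squeezed vacuum supported on all even levels, so the constructed state is genuinely non-Gaussian whenever $\nu_{a}>1$ or $\nu_{b}>1$ (and if $\nu_{a}=\nu_{b}=1$ the Gaussian-passive state is the vacuum itself, for which the assertion is trivial). Undoing $U_{\mathrm{G}}$ then yields a pure non-Gaussian state with precisely the first and second moments of the original Gaussian-passive state, and by the first fact its entire energy, $\tfrac{\omega_{a}}{2}(\nu_{a}-1)+\tfrac{\omega_{b}}{2}(\nu_{b}-1)$, is extractable by a unitary, which proves the lemma.

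I do not expect a genuine obstacle here: the only substantive part is the short moment calculation for $\ket{\psi_{i}}$, together with the bookkeeping of carrying the construction back through the Gaussian unitary that diagonalized $\Gamma$. The one point deserving a little care is making sure the constructed state is not accidentally Gaussian, which is why superposing the vacuum with a Fock state of odd index $\geq3$ is convenient: its support then cannot coincide with that of any squeezed vacuum.
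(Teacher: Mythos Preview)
Your proposal is correct and follows essentially the same approach as the paper: reduce to the diagonal (Williamson) case via an orthogonal symplectic map, build a single-mode pure state as a superposition of two Fock levels separated by at least three excitations so that the first moments and the off-diagonal second moments vanish while $\langle a^{\dagger}a\rangle$ is tuned to $(\nu_i-1)/2$, and then observe that any pure state can be unitarily rotated to the vacuum. The only cosmetic difference is the explicit family you use, $\sqrt{1-p_i}\,\ket{0}+\sqrt{p_i}\,\ket{n_i}$ with odd $n_i\geq3$, versus the paper's $\sqrt{p}\,\ket{n}+\sqrt{1-p}\,\ket{n+3}$; both parameterizations cover all $\nu_i\geq1$, and your additional remark on non-Gaussianity (via the odd-index support) is a nice clarification that the paper leaves implicit.
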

\begin{proof}
To prove the lemma first note that any Gaussian-passive state of an arbitrary number of modes with different frequencies (clause~(\ref{Theorem I cond i}) of Theorem~\ref{theorem:Gaussian passive states}) has a locally thermal covariance matrix with different effective temperatures for each mode. In this case it is therefore enough to consider a single mode in a thermal state with arbitrary temperature, and show that there exists a pure state with the same first and second moments. If such a pure state exists for a single mode for any temperature, then one can certainly find pairs of states of this kind whose tensor product is compatible with a Gaussian-passive, locally thermal two-mode state.\\

In the case that the covariance matrix has nonzero off-diagonal blocks, i.e., if clause~(\ref{Theorem I cond ii}) of Theorem~\ref{theorem:Gaussian passive states} applies, the covariance matrix can be brought to the locally thermal form by an energy conserving, Gaussian unitary, that is, a beam splitting transformation with angle $\theta$ given by Eq.~(\ref{eq:beam splitter angle}). Then, as before, one is required to find a pure state that matches the resulting locally thermal covariance matrix. Applying the inverse of the beam splitting operation to this state, one finally obtains a pure two-mode state compatible also for Gaussian-passive states with non-diagonal covariance matrices.\\

To identify the pure states in question, recall that the first moments of a Gaussian-passive state must vanish. This is also the case for all Fock states $\ket{n}=(a^{\dagger})^{n}/\sqrt{n!}\ket{0}$. Indeed, this is even true for all superpositions of Fock states that differ by two or more excitations, for instance, all states of the form $\sum_{k}c_{k}\ket{n+m\,k}$ for any $n,m\in\mathds{N}_{0}$ and $m\geq2$. Restricting to this family of states we are interested in identifying those members that also have the second moments of a thermal state. This is achieved by considering states that are superpositions of Fock states that differ by three (or more) excitations ($m\geq3$), for instance, $\sqrt{p}\ket{n}+\sqrt{\raisebox{-1pt}{$1-$}p}\ket{n+3}$ for $0\leq p\leq1$. For such a state the covariance matrix takes the form of a thermal state $\Gamma=\nu\mathds{1}$, where $\nu=\bigl(2\nr p\nr\nr n+2\nr(1\nl-\nl p)(n\nl+\nl3)\nl+\nl1\bigr)$. By selecting the discrete value $n\in\mathds{N}_{0}$ and the continuous parameter $p$ appropriately, the second moments of this state can be chosen to match those of the desired Gaussian-passive state. The thus constructed state $\rho$ is clearly pure, and its (non-equilibrium) free energy $F(\rho)=E(\rho)-T\nr S(\rho)$, where $S(\rho)=-\tr\bigl[\rho\ln(\rho)\bigr]$ is the von~Neumann entropy, is hence identical to its average energy~$E(\rho)$. The latter can of course be lowered to zero by a (non-Gaussian) unitary by rotating the pure state towards the vacuum state.
\end{proof}

As we have seen in Lemma~\ref{lemma:Gaussian passive pure state}, if only the first and second moments of a state are known and the state is Gaussian-passive, in principle all (or none) of the state's energy may be extractable. In other words the gap between the free energy, i.e., the energy extractable by general unitary transformations, and the energy that can be extracted using only Gaussian unitaries is maximal. However, for such a maximal gap both the initial and final state must be pure, since we are applying only (Gaussian) unitary transformations, which leave the spectrum (and hence the entropy) unchanged. As most machines operate at an ambient temperature that is above zero and the second law implies that it is highly unlikely for any state to fall below the entropy of the corresponding thermal state, a maximal gap in the above sense may not occur in practice.

It is therefore reasonable to assume that, in addition to the first and second moments, also a lower bound on the entropy of the state is known. Given some nonzero entropy $S(\rho)=-\tr\bigl[\rho\ln(\rho)\bigr]$, the average energy of the state is bounded from below and may not be lowered arbitrarily\footnote{We implicitly assume that the spectrum of the Hamiltonian is bounded from below, i.e., a ground state exists.}. In such a case, it is of interest to ask whether the free energy gap is still maximal. That is, we ask: Does every Gaussian-passive state with entropy\footnote{Note that the entropy is not determined by the second moments alone, since the Gaussian-passive state need not be a Gaussian state.} $S_{0}$ admit a state $\rho$ that has the same entropy, $S(\rho)=S_{0}$, and the same first and second moments $\expval{\mathbb{X}}=0$ and $\Gamma$, but whose energy $E(\rho)$ [which is determined by $\expval{\mathbb{X}}$ and $\Gamma$ via Eq.~(\ref{eq:energy first and second moments})] may be lowered to the minimal value $E_{0}$ that is compatible with $S_{0}$ using (arbitrary) unitary transformations?

\begin{theorem}\label{theorem:extractable energy gap}
The first and second moments of any Gaussian-passive state with entropy $S_{0}$ are compatible with a (non-Gaussian) state of the same entropy for which the maximal amount of energy (the energy difference to the thermal state of entropy $S_{0}$) is extractable by unitary transformations.
\end{theorem}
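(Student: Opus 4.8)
The plan is to peel the statement down to a pure existence question and then settle it by an explicit construction in the spirit of Lemma~\ref{lemma:Gaussian passive pure state}. Let $\tau(\beta_{0})$ be the Gibbs state of $H_{0}$ with $S(\tau(\beta_{0}))=S_{0}$, so that $E_{0}=\tr(H_{0}\tau(\beta_{0}))$ is, by the usual variational characterisation of Gibbs states, the smallest energy of any state of entropy $S_{0}$. Because unitaries preserve the spectrum, a state $\rho$ with $S(\rho)=S_{0}$ can have its energy lowered to $E_{0}$ by a unitary precisely when its passive rearrangement equals $\tau(\beta_{0})$, that is, precisely when $\mathrm{spec}(\rho)=\mathrm{spec}(\tau(\beta_{0}))$. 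Hence it is enough to exhibit a state $\rho$ with $\expval{\mathbb{X}}=0$, covariance matrix $\Gamma$, and $\mathrm{spec}(\rho)=\mathrm{spec}(\tau(\beta_{0}))$: its entropy is then $S_{0}$, its energy $E(\rho)=E(\sigma)$ is fixed by $\Gamma$ and satisfies $E(\rho)\geq E_{0}$ because $\sigma$ already has entropy $S_{0}$, and the extractable energy $E(\rho)-E_{0}$ is therefore the maximum compatible with the moments and with $S_{0}$. If $\Gamma$ has the form~(\ref{Theorem I cond ii}) I would first conjugate by the energy-preserving beam splitter of Eq.~(\ref{eq:beam splitter angle}) to bring it to the Williamson form $\Gamma=\diag\{\nu_{a},\nu_{a},\nu_{b},\nu_{b}\}$ with $\nu_{a}\geq\nu_{b}$, and undo this at the end.

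For the construction I would look for $\rho$ diagonal in the joint Fock basis $\{\ket{m,n}\}$, its eigenvalue multiset being $\mathrm{spec}(\tau(\beta_{0}))=\{(1-q_{a})(1-q_{b})\,q_{a}^{m}q_{b}^{n}\}_{m,n\geq0}$ with $q_{i}=e^{-\beta_{0}\omega_{i}}$, but reshuffled over the lattice. Any Fock-diagonal state automatically has $\expval{\mathbb{X}}=0$ and $\expval{a^{2}}=\expval{b^{2}}=\expval{ab}=\expval{ab^{\dagger}}=0$, hence covariance matrix $\diag\{2\bar n_{a}+1,2\bar n_{a}+1,2\bar n_{b}+1,2\bar n_{b}+1\}$ with $\bar n_{a}=\sum_{m,n}m\,P_{mn}$ and $\bar n_{b}=\sum_{m,n}n\,P_{mn}$, so only the two marginal occupations $\bar n_{a}=(\nu_{a}-1)/2$, $\bar n_{b}=(\nu_{b}-1)/2$ need to be matched. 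When $\tau(\beta_{0})$ is, mode by mode, colder than the Gaussian-passive state ($\coth(\beta_{0}\omega_{i}/2)\leq\nu_{i}$) the easiest realisation is a product $\rho_{a}\otimes\rho_{b}$ of two single-mode states, each assembled exactly as in Lemma~\ref{lemma:Gaussian passive pure state} from a mixture---with the single-mode thermal eigenvalues of $\tau_{i}(\beta_{0})$ as weights---of superpositions $\sqrt{p}\ket{n}+\sqrt{1-p}\ket{n+3}$ on pairwise disjoint triples-apart Fock pairs; the discrete pair assignment together with the continuous parameters $p$ then dial in the required mean occupation.

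The step I expect to carry the real weight is showing that the target pair $(\bar n_{a},\bar n_{b})$ is always attainable---in particular when some local mode of the Gaussian-passive state is \emph{colder} than $\tau(\beta_{0})$, so that the product ansatz is unavailable (the extreme being $\nu_{b}=1$, which pins mode $b$ to the vacuum and forces the entire spectrum $\mathrm{spec}(\tau(\beta_{0}))$ to be carried by mode $a$). The marginal pairs obtainable by reshuffling $\mathrm{spec}(\tau(\beta_{0}))$ over the lattice form a region whose closure is convex, cut out by the supporting inequalities $\mu\,\bar n_{a}+\lambda\,\bar n_{b}\geq f(\mu,\lambda)$, where $f(\mu,\lambda)$ is the energy of the passive rearrangement of $\mathrm{spec}(\tau(\beta_{0}))$ against $\mu\,a^{\dagger}a+\lambda\,b^{\dagger}b$; the known inequality $\omega_{a}\bar n_{a}+\omega_{b}\bar n_{b}=E(\sigma)\geq E_{0}=f(\omega_{a},\omega_{b})$ disposes of the direction $(\mu,\lambda)\propto(\omega_{a},\omega_{b})$, and the remaining directions would have to be controlled using the entropy ledger $S_{0}\leq g(\nu_{a})+g(\nu_{b})$ (maximality of the Gaussian entropy at fixed second moments, together with subadditivity) combined with $S_{0}=g(\coth(\beta_{0}\omega_{a}/2))+g(\coth(\beta_{0}\omega_{b}/2))$ and the orderings $\nu_{a}\geq\nu_{b}$, $\coth(\beta_{0}\omega_{a}/2)\geq\coth(\beta_{0}\omega_{b}/2)$. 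Once membership in this region is secured, the exact matching is obtained by starting from a Fock-diagonal reshuffling that slightly underpopulates and converting a handful of Fock pairs into three-apart superpositions with continuously tunable $p$, which shifts $(\bar n_{a},\bar n_{b})$ smoothly while leaving the spectrum and the block structure of $\Gamma$ untouched.
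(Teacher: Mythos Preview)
Your skeleton matches the paper's proof: reduce case~(\ref{Theorem I cond ii}) to case~(\ref{Theorem I cond i}) via the energy-preserving beam splitter of Eq.~(\ref{eq:beam splitter angle}); start from the Gibbs state $\tau(\beta_{0})$ and reach the target second moments by unitaries that rotate within Fock subspaces whose levels differ by at least three, so that $\langle\mathbb{X}\rangle=0$ and the off-diagonals of $\Gamma$ remain zero while the diagonal is moved continuously; invoke the unbounded Fock ladder to access arbitrarily large energies. This is exactly the paper's mechanism, and the paper's argument is in fact shorter than yours because it stops here.

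Where you go beyond the paper is the ``cold mode'' case. The paper reduces directly to a single-mode statement (``we can again focus on proving the statement \ldots\ for single-mode Gaussian-passive states'') and only argues that, starting from a single-mode thermal state, one can always \emph{raise} the energy by rotating in $\mathrm{span}\{|0\rangle,|n\rangle\}$ for large $n$. For a genuinely single-mode problem this is enough, since the Gaussian max-entropy bound $S_{0}\le g(\nu)$ forces $\nu\ge\coth(\beta_{0}\omega/2)$. The paper does not address whether the product of two such single-mode constructions still carries the \emph{two}-mode thermal spectrum when one local target $\nu_{b}$ sits below $\coth(\beta_{0}\omega_{b}/2)$\textemdash the scenario you correctly isolate (e.g.\ $\sigma=\tau_{a}(\beta_{a})\otimes|0\rangle\!\langle0|$ with $S_{0}>0$). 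On this point your proposal is more careful than the paper's argument.

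That said, your own treatment of this hard case is only a sketch. The convex-region picture is reasonable, but the assertion that ``the remaining directions would have to be controlled using the entropy ledger $S_{0}\le g(\nu_{a})+g(\nu_{b})$ together with the orderings'' is not established: you would need, for every $(\mu,\lambda)$, the passive value $f(\mu,\lambda)$ of the two-mode thermal spectrum against $\mu\,a^{\dagger}a+\lambda\,b^{\dagger}b$ to be bounded by $\mu(\nu_{a}-1)/2+\lambda(\nu_{b}-1)/2$, and the entropy inequality alone does not obviously deliver this. In your own extreme example $\nu_{b}=1$, the constraint $\bar n_{b}=0$ forces $\rho=\rho_{a}\otimes|0\rangle\!\langle0|$, so the entire two-mode thermal spectrum must be carried by one ladder, and you must check that its passive occupation on that ladder is at most $(\nu_{a}-1)/2$; this concrete inequality is left open.
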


\begin{proof}
For a Gaussian-passive state with fixed first and second moments ($\expval{\mathbb{X}}=0$ and $\Gamma$), the energy is also fixed, see Eq.~(\ref{eq:energy first and second moments}). In addition, we assume that the entropy of the initial (Gaussian-passive) state is $S_{0}$. Clearly, any previous Gaussian unitaries or possible general unitary transformations on the closed system that are yet to be carried out must leave this entropy invariant. On the other hand, the state $\rho$ that minimizes the energy $E(\rho)$ at a fixed entropy $S_{0}$ is the thermal state of Eq.~(\ref{eq:thermal state}). Since we cannot change the spectrum using unitary transformations, we hence have to show that for every Gaussian-passive state at entropy $S_{0}$ there exists a state $\rho$ that has the same spectrum as a thermal state of entropy $S_{0}$, but whose first and second moments (and hence its energy) match those of the Gaussian-passive state.

The strategy to show that this is possible is to start from the thermal state and manipulate it using unitary transformations to reach the desired first and second moments. In this way the spectrum of the state is preserved. In particular, we know that the spectrum is also invariant under the possible application of an energy-conserving beam splitting operation in the case that the covariance matrix of the Gaussian-passive state is not diagonal (clause~(\ref{Theorem I cond ii}) of Theorem~\ref{theorem:Gaussian passive states}). Consequently, we can again focus on proving the statement of Theorem~\ref{theorem:extractable energy gap} for single-mode Gaussian-passive states with thermal covariance matrices, as we have argued in the proof of Lemma~\ref{lemma:Gaussian passive pure state}. For each of these local single-mode covariance matrices the diagonal elements are identical and linear functions of the energy, see Eq.~(\ref{eq:energy first and second moments}). The first moments as well as the off-diagonals of the covariance matrix of both the thermal state and the initial state vanish. We therefore restrict to rotations in subspaces of Fock states that differ by three (or more) excitations to keep it that way.

We now just have to show that this method allows increasing the energy of the thermal state to reach the energy of any single-mode Gaussian-passive state, which also fixes the desired nonzero second moments. For any specified energy this can be achieved by continuously rotating in the subspace spanned by the states $\ket{0}$ and $\ket{n}$ for some sufficiently large $n\geq3$. Since the thermal state is (i) diagonal in the Fock basis, (ii) the eigenvalues are strictly decreasing with increasing $n$, and (iii) the Hilbert space is infinite-dimensional, one may reach arbitrarily large energies at a fixed entropy. Finally, because a Gaussian-passive state with the same first and second moments (and therefore same energy), and with the same entropy $S_{0}$ as the Gaussian-passive initial state can be reached unitarily from the minimal energy thermal state, the converse must also be true.
\end{proof}

It is quite remarkable to note that the proof of Theorem~\ref{theorem:extractable energy gap} makes use of the infinite-dimensionality of the Hilbert space, which is reminiscent of the famous Hilbert hotel paradox (see, e.g., Ref.~\cite[p.~17]{Gamow1947HilbertHotel}). The fact that the Hilbert space is infinite-dimensional is crucial to give the necessary freedom to be able to unitarily increase the energy of any thermal state to arbitrary values without introducing nonzero first moments or off-diagonal second moments. In any finite dimension this is not possible. In practice, one may encounter systems that are effectively finite-dimensional, which would place limitations on the applicability of Theorem~\ref{theorem:extractable energy gap}. This could lead to a potential reduction of the ergotropy gap.

Nonetheless, it is interesting to observe that the infinite dimensions of the Hilbert space may even allow extending the statement of Theorem~\ref{theorem:extractable energy gap} to cases where more than the first two statistical moments of the Gaussian-passive state are known. For instance, suppose an expectation value of a cubic combination of mode operators such as $\expval{a^{3}}$ was known. In this example, one could rotate in a subspace spanned by two Fock states separated by three excitations (e.g., $\ket{k}$ and $\ket{k+3}$ for some appropriate value of $k$) to arrange for the desired expectation value without changing the lower order moments, energy, or entropy.

\section{Conclusion}\label{sec:discussion}

In this article, we have investigated the fundamental thermodynamic problem of work extraction from continuous-variable quantum systems under the restriction to Gaussian unitaries. These operations can typically be easily implemented in quantum optics experiments, whereas the general unitary transformations that may be required to extract work from a given non-passive state may be extremely challenging to realize. To capture the limitations of this restricted class of operations for the task at hand we have introduced the notion of Gaussian passivity. We have given necessary and sufficient criteria for identifying Gaussian-passive states (whose energy may not be reduced by Gaussian unitaries) based on the first and second statistical moments of an arbitrary number of modes. Furthermore, we have shown that although the first two statistical moments provide complete information about the Gaussian ergotropy (the maximal amount of energy extractable in a Gaussian unitary process), the gap to the non-Gaussian ergotropy may always be maximal if the state is not fully known, even under entropy constraints.\\

This trade-off between usefulness and severe limitation of Gaussian operations comes as no surprise and is a recurring feature in continuous-variable quantum information. For instance, Gaussian operations are known not to be universal for computational tasks~\cite{LloydBraunstein1999}. Similar properties have also been described in~a quantum thermodynamical framework of converting work and correlations. There it was found that, while Gaussian operations provide optimal scaling for the creation of entanglement using large input energies, they cannot create entanglement with finite energy at arbitrary temperatures~\cite{BruschiFriisFuentesWeinfurtner2013, BruschiPerarnauLlobetFriisHovhannisyanHuber2015}.\\

While uncovering and quantifying the restrictiveness of Gaussian operations in the thermodynamic context, our results also provide practical strategies for the implementation of quantum heat engines based on Gaussian operations in quantum optical architectures. In particular, the steps~\hyperlink{proof displacements}{P1}-\hyperlink{proof BS}{P4} of the proof of Theorem~\ref{theorem:Gaussian passive states} can be viewed as a set of instructions for Gaussian work extraction.

\begin{acknowledgments}
We thank Michael Jabbour for interesting discussions on the topic and Chappy Bibound{\'e} for moral support. E.~G.~B. acknowledges support by the Natural Sciences and Engineering Research Council of Canada. E.~G.~B. and N.~F. thank the Universit{\'e} de Gen{\`e}ve for hospitality. N.~F. is grateful to the organizers of the workshop RACQIT for financial support and to the Foo Fighters for providing inspiration whilst proving Theorem~\ref{theorem:Gaussian passive states}. M.~H. acknowledges funding from the Austrian Science Fund (FWF) through the  START  project Y879-N27, the Swiss National Science Foundation (AMBIZIONE PZ00P2\_161351), from the Spanish MINECO through Project No.~FIS2013-40627-P and the Juan de la Cierva fellowship (JCI 2012-14155), from the Generalitat de Catalunya CIRIT Project No.~2014 SGR 966, and from the EU STREP-Project ``RAQUEL", as well as the EU COST Action MP1209 ``Thermodynamics in the quantum regime".
\end{acknowledgments}

\end{document}